\numberwithin{equation}{section}
\newcommand{\KK}[1]{\todo[color=purple!20]{KK: #1}}
\newcommand{\NM}[1]{\todo[color=yellow!40]{NM: #1}}
\newcommand{\R}{\mathbb{R}}
\newcommand{\Q}{\mathbb{Q}}
\newcommand{\Pred}{\mathcal{P}}
\newcommand{\lin}{\mathit{lin}}
\newcommand{\nonlin}{\mathit{nl}}
\newcommand{\Fun}{\mathcal{F}}
\newcommand{\D}[1]{\emph{#1}}
\newcommand{\nl}{\mathit{nl}}
\newcommand\Title{A CDCL-style calculus for solving non-linear constraints}
\title{\Title
\thanks{
The research leading to these results has received funding from the DFG grant WERA MU 1801/5-1 and the DFG/RFBR grant
CAVER BE 1267/14-1 and 14-01-91334.
 \protect\includegraphics[width=1em]{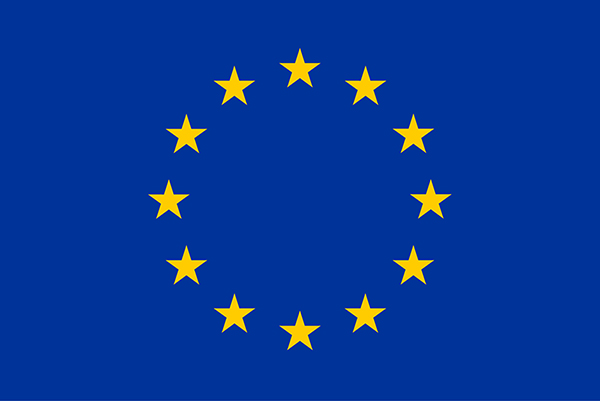} This project has received funding from the European Union’s Horizon 2020 research and innovation programme under the
Marie Skłodowska-Curie grant agreement No 731143.
}}
\author{F. Brauße\inst{1} \and K. Korovin \inst{2} \and M. Korovina\inst{3} \and N. Müller\inst{1} }
\institute{%
Abteilung Informatikwissenschaften, Universität Trier, Germany \and
The University of Manchester, UK \and
A.P. Ershov Institute of Informatics Systems, Novosibirsk, Russia}
\newcommand*\dom{\operatorname{dom}\nolimits}
\newcommand*\ksmt{\texttt{ksmt}\xspace}
\newcommand*\SAT{\texttt{sat}\xspace}
\newcommand*\UNSAT{\texttt{unsat}\xspace}
\newcommand*\UNKNOWN{\texttt{unknown}\xspace}
\newcommand*{\cons}{\mathbin{::}}
\newcommand*{\Res}{\operatorname{res}\nolimits}
\newcommand{\NLin}{\mathcal N}
\newcommand{\Lin}{\mathcal L}
\newcommand*\todofb[2][]{\todo[color=cyan!30,#1]{FB: #2}}
\newcommand*\True{\textsf{true}\xspace}
\newcommand*\False{\textsf{false}\xspace}
\newcommand*\nil{\textsf{nil}\xspace}
\newcommand*\notfalse{conflict-free\xspace}
\newcommand*\linnotfalse{linearly \notfalse}
\newcommand*\transition{\Rightarrow}
\newcommand*\Fdec{\ensuremath{\mathcal F_\mathrm{DA}}}
\newcommand*\MCSAT{\texttt{MCSAT}\xspace}
\newcommand*\NLSAT{\texttt{NLSAT}\xspace}
\newcommand*\dReal{\texttt{dReal}\xspace}
\begin{document}

\maketitle

\begin{abstract}
  In this paper we propose a novel approach for checking satisfiability of non-linear constraints over the reals, called \ksmt.
The procedure is based on conflict resolution in CDCL-style calculus, using a composition of symbolical and numerical methods. To deal with the non-linear components 
in case of conflicts we use numerically constructed restricted linearisations. This approach covers
a large number of computable non-linear real functions such as polynomials, rational or trigonometrical functions and beyond.
A prototypical implementation has been evaluated on several non-linear SMT-LIB examples and the results have been compared with state-of-the-art SMT solvers.
\end{abstract}

\section{Introduction}

Continuous constraints occur  naturally
in many areas of computer science such as verification of safety-critical systems, program analysis and theorem proving.
Historically, there have been two major approaches to solving continuous constraints. 
One of them is the symbolic approach, originated by the Tarski's decision procedure for the real closed fields \cite{Tarski}, and developed further in  procedures based on cylindrical decomposition (CAD)~\cite{C75}, Gr{\"o}bner basis~\cite{DBLP:journals/cca/Buchberger76,Kapur:2010:NAC}, and virtual substitution~\cite{LoosWeispfenning:93a,DolzmannSturm:97a}. 
%
Another one is the numerical approach, based on interval computations, where
the technique of interval constraint propagations have been explored to deal
with continuous constraints on compact intervals,
e.g., \cite{BenhamouG06,FHTRS07,Ratschan14,GaoAC12_1}.
It is well known that both approaches have their strengths and weaknesses
concerning completeness, efficiency and expressiveness.

Recently, a number of methods has been developed aimed at merging strengths of symbolical and numerical methods, e.g.~\cite{PassmorePM12,fontaine:hal-01946733,DBLP:journals/tocl/CimattiGIRS18,GCAI2016:AVATAR_Modulo_Theories}.
 In particular,  the approach developed in this paper is motivated by extensions of CDCL-style reasoning into domains beyond propositional logic such as linear~\cite{CR:KTV09,McMillanEtAl:09,KorovinTV11,KKS14} and polynomial constraints~\cite{JovanovicM12}. 
In this paper we develop a conflict-driven framework called \ksmt for solving non-linear constraints over large class of functions including polynomial, exponential and trigonometric functions. 
Our approach combines model guided  search for a satisfying solution and 
constraint learning as the result of failed attempts to extend the candidate solution. 

In the nutshell, our \ksmt algorithm works as follows. 
Given a set of non-linear constraints, we first separate the set into linear and non-linear parts.  
Then we incrementally extend a candidate solution into a solution of the system 
and when such extension fails we resolve the conflict by 
generating a lemma that excludes a region which includes the falsifying assignment. 
There are two types of conflicts: between linear constraints which are 
resolved in a similar way as in~\cite{CR:KTV09} and non-linear conflicts which are resolved by local linearisations developed in this paper. 
One of the important properties of our algorithm is that all generated lemmas are linear and hence the non-linear part of the problem remains unchanged during the search. In other words, our algorithm can be seen as applying gradual linear approximations of non-linear constraints by local linearisations guided by solution search in the CDCL-style. 

The quantifier-free theory of reals with transcendental functions is well known to be undecidable~\cite{DBLP:journals/jsyml/Richardson68}
and already problems with few variables pose considerable challenge for automated systems.
In this paper we focus on a practical algorithm for solving non-linear constraints applicable to problems with large number of variables rather than on completeness results.  
Our \ksmt algorithm can be used for both finding a solution and proving that no solution exist. 
In addition to a general framework we discuss how our algorithm works in a number of important cases such as polynomials, transcendental and some discontinuous functions.
In this paper we combine solution guided search in the style of conflict resolution,
bound propagation 
and MCSAT~\cite{MouraJ13} 
with linearisations of real computable functions.
The theory of computable functions has been developed in Computable
Analysis~\cite{We00} with implementations provided by exact real arithmetic~\cite{Muller00}.
Linearisations have been employed in different SMT theories before,
including NRA and a recently considered one with transcendental functions~\cite{DBLP:conf/vmcai/MarechalFKMP16,RTJ,DBLP:journals/tocl/CimattiGIRS18},
however, not for the broad class we consider here.
We define a general class of functions called {\it functions with decidable rational approximations} to which our approach is applicable. This class includes common transcendental functions, exponentials, logarithms but also some discontinuous functions.

We implemented the \ksmt algorithm and \todo{not really?} evaluated it on SMT benchmarks.
Our implementation is at an early stage and lacking many features but already 
outperforms many state-of-the-art SMT solvers on certain classes of problems.






\section{Preliminaries}

We consider the \todofb{`complete set of'} reals extended with non-linear functions $\R_{\nl}= (\R,\langle\Fun_{\lin}\cup \Fun_{\nonlin},\Pred\rangle)$,
where $\Fun_\lin$ 
consists of rational constants, addition  and multiplication by rational constants; 
$\Fun_\nonlin$ consists of a \todofb{detailed in \Cref{def:F}}
selection of non-linear functions including
multiplication, trigonometric, exponential and logarithmic functions;
$\Pred= \{{<},{\leq},{>},{\geq}\}$ 
are predicates.

We consider a set of variables $V$. We will use $x,y,z$ possibly with indexes for variables in $V$,
similar we will use \todofb{let's check $q,a,b,c$} $q,a,b,c,d$ for rationals,
$f,g$ for non-linear functions in $\Fun_{\nonlin}$.
Terms, predicates and formulas over $X$ are defined in the standard way. 
%
We will also use predicates ${\neq},=$, which can be defined using predicates in $\Pred$. 
An atomic formula is a formula of the form $t\diamond 0$ where $\diamond\in\Pred$. 
A literal is either an atomic formula or its negation. 
In this paper we consider only quantifier-free formulas in conjunctive normal form. 
We will use  conjunctions and sets of formulas interchangeably. 

We assume that terms are suitably normalised. 
A linear term is a term of the form $q_1x_1+\ldots + q_nx_n + q_0$.
A linear inequality is an atomic formula of the form $q_1x_1+\ldots + q_nx_n + q_0\diamond 0$. 
A linear clause is a disjunction of linear inequalities and a formula is in linear CNF if it is a conjunction of linear clauses.

%

\subsection{Separated linear form}
In this paper we consider the satisfiability problem of quantifier-free formulas in CNF over $\R_{\nl}$,
where the linear part is separated from the non-linear part which we call separated linear form.

\begin{definition}\label{def:separated}
A formula $F$ is in \D{separated linear form} if it is of the form $F=\Lin\cup \NLin$ where $\Lin$
is a set of clauses containing predicates only over linear terms and $\NLin$ is a set of unit-clauses each containing only non-linear literals of the form $x \diamond f(\vec t)$, where $f\in \Fun_{\nonlin}$, $\vec t$ is a vector of terms and $\diamond\in\Pred$.
\end{definition}

\begin{lemma}[Monotonic flattening]\label{lem:preproc}
Any quantifier-free formula $F$ in CNF over $\R_\nl$ can be transformed into an equi-satisfiable separated linear form in polynomial time.
\end{lemma}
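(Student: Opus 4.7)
The plan is to introduce a fresh variable for every non-linear subterm occurrence and define it via a unit clause in $\NLin$, processing the syntax tree of $F$ in a post-order traversal. Concretely, for each literal $\ell$ of $F$ and each maximal non-linear subterm $f(\vec t)$ occurring in $\ell$ (with $f\in\Fun_\nonlin$), I pick a fresh variable $y$, replace the occurrence of $f(\vec t)$ by $y$ inside $\ell$, and add the definition to the non-linear part. If the arguments $\vec t$ still contain non-linear symbols, the same procedure is applied recursively to them, so that eventually every nested non-linear function call has only linear arguments, matching the shape $x\diamond f(\vec t)$ required by \Cref{def:separated}. After this pass, the transformed $\Lin$ contains only linear atoms (the original literals with non-linear subterms replaced by fresh variables) and the transformed $\NLin$ contains only unit clauses of the required form.

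The reason the lemma is called \emph{monotonic} flattening is that one does not need the full equality $y=f(\vec t)$ — encoded as the pair $y\le f(\vec t)\wedge y\ge f(\vec t)$ — for every introduced variable, but only a single one-sided bound whose direction is determined by the polarity of the occurrence in the containing literal. For instance, if $y$ replaces $f(\vec t)$ at a position where increasing $y$ only weakens the containing literal (say it appears with a positive coefficient in a $(\dots)\le 0$ atom), it suffices to add $y\ge f(\vec t)$; the opposite polarity calls for $y\le f(\vec t)$. I would make this explicit by a case distinction on the sign of the coefficient of $y$ in the rewritten linear inequality and on $\diamond$.

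Equi-satisfiability is then shown in both directions. From a model $\sigma$ of $F$, extend it to the fresh variables by evaluating each newly introduced $y$ to the exact value of its defining non-linear subterm under $\sigma$; by construction every added unit clause is satisfied and the rewritten linear clauses evaluate to the same truth values as the originals. Conversely, given a model $\sigma'$ of the transformed formula, the monotonic direction of each defining bound guarantees that replacing $y$ by $f(\vec t)$ in any literal of $\Lin$ preserves its truth, so $\sigma'$ (restricted to the original variables) already satisfies $F$. A simple induction on the nesting depth handles the recursive case where $\vec t$ itself contains fresh variables introduced earlier. Polynomial time and polynomial size are immediate: one fresh variable and at most one (or two, in the equality fallback) unit clauses are produced per occurrence of a non-linear symbol in $F$.

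The main obstacle will be a clean, uniform bookkeeping of polarities: strict versus non-strict comparisons, positive versus negative coefficients of the replaced subterm in its surrounding linear combination, and negations when literals are written as $t\diamond 0$. I expect the cleanest presentation to first treat the non-monotonic variant (always using equality, i.e.\ two unit clauses per fresh variable), establish equi-satisfiability there in one line, and then observe as a refinement that in each specific polarity only one of the two bounds is actually needed — this keeps the argument modular and avoids a large case split in the main correctness proof.
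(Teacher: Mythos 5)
Your core construction coincides with the paper's: for each occurrence of a non-linear term $f(\vec t)$ inside a linear combination, introduce a fresh variable, rewrite the literal, and add a single one-sided unit constraint whose direction is fixed by the sign of the coefficient and the direction of $\diamond$ (non-strict in both cases); equi-satisfiability is argued exactly as you do — extend a model of $F$ by evaluating each fresh variable exactly, and use the one-sided bound together with the coefficient sign to push a model of the transformed system back to $F$ — with an induction over the non-linear occurrences and an obvious polynomial bound. Up to presentation, this is the paper's proof.

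The one point where you over-reach is the recursive flattening of nested non-linear subterms. First, it is not needed: in \Cref{def:separated} the arguments $\vec t$ in $x\diamond f(\vec t)$ are arbitrary terms, so the lemma only requires pulling non-linear terms out of the \emph{linear} clauses, and the paper explicitly defers deeper flattening. Second, and more importantly, your polarity-based one-sided refinement is only sound where the replaced occurrence sits in a linear context, because there the sign of its coefficient tells you in which direction the containing literal is monotone in the fresh variable. If $y$ replaces a subterm $g(\vec s)$ occurring inside the argument of another function in $\Fun_\nonlin$, there is no coefficient and no polarity: whether "increasing $y$ weakens the literal" depends on monotonicity of the outer function in that argument, which in general fails or holds only piecewise (this is precisely why the paper speaks of "linear conditions expressing regions of monotonicity"). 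With only, say, $y\geq g(\vec s)$ in $\NLin$, a model of the transformed formula may choose $y$ strictly above $g(\vec s)$ and satisfy the outer constraint even though no assignment satisfies the original, so equi-satisfiability breaks in the transformed-to-original direction. The fix is easy and you essentially mention it yourself: either do not recurse into non-linear arguments at all, or keep the full equality (two unit clauses $y\leq g(\vec s)$ and $y\geq g(\vec s)$) for every nested definition, and restrict the one-sided "monotonic" refinement to occurrences that carry a genuine rational coefficient in a linear atom.
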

\begin{proof}
Consider a clause $C$ in $F$ which contains a linear combination of non-linear terms, i.e., is of the form $C = qf(\vec t) + p\diamond 0 \vee D$, where  $f\in \Fun_{\nonlin}$ and $q\not = 0$. Then we introduce a fresh variable $x$, add 
$x\diamond' f(\vec t)$ into $S$ and replace $C$ with $qx + p\diamond 0 \vee D$. 
Here, $\diamond'$ is $\geq$, if either $q >0 $ and $\diamond \in \{ \leq, <\}$ or $q< 0$ and  $\diamond \in \{ \geq, >\}$; and $\diamond'$ is $\leq$ otherwise. 
The resulting formula is equi-satisfiable to $F$. The claim follows by induction on the non-linear monomials.
\end{proof}
Let us remark that monotonic flattening avoids introducing equality predicates, which is based on the monotonicity of linear functions. 
In some cases we need to flatten non-linear terms further (in particular to be able to represent terms as functions in the $\Fdec$ class introduced in \Cref{sec:impl}). In most cases this
can be done in the same way as in \Cref{lem:preproc} based on monotonicity of functions in corresponding arguments, but we may need to introduce linear conditions expressing regions of monotonicity. For simplicity of the exposition we will not consider such cases~here.



\subsection{Trails and Assignments}

Any  sequence of single variable assignments $\alpha\subset(V\times\mathbb Q)^*$ such that a variable is assigned at most once is called a \emph{trail}. By ignoring the order of assignments in $\alpha$, we will regard
$\alpha$ as a (partial) assignment of the real variables in $V$ and use $V(\alpha)\subseteq V$ to denote the set of variables assigned in $\alpha$. We use the notation
$\llbracket t\rrbracket^\alpha$ to\pagebreak[2] denote the (partial) application of $\alpha$ to a term $t$,
that is, the term resulting from replacing every free variable $x$ in $t$ such
that $x\in V(\alpha)$ by $\alpha(x)$ and evaluating term operations on constants in their domains.
We extend $\llbracket\cdot\rrbracket^\alpha$ to predicates over terms and to CNF in the usual way.
An evaluation of a formula results in $\True$ or $\False$, if all variables in the formula are assigned, 
or else in a partially evaluated formula. 
A \emph{solution} to a CNF $\mathcal C$ is a total assignment $\alpha$ such that
\todofb{`terms in $l$' as well as $\llbracket t\rrbracket^\alpha$ are defined, how to formulate better?}
each term in $\mathcal C$ is defined under $\alpha$ and
for each clause $C\in\mathcal C$ there is (at least) one literal $l\in C$ with $\llbracket l\rrbracket^\alpha=\True$.

Any triple $(\alpha,\Lin, \NLin)$ when $\alpha$ is a trail,
$\Lin$ is a set of clauses over linear predicates and $\NLin$ 
is a set of
unit clauses over non-linear predicates is called \emph{state}.
A state is called \emph{\linnotfalse} if
$\llbracket\Lin\rrbracket^\alpha\neq\False$.
It is called \emph{\notfalse} if it is \linnotfalse
and $\llbracket\NLin\rrbracket^\alpha\neq\False$. 

The main problem we consider in this paper is finding a solution to
$\Lin\land\NLin$ or showing that no
solution exists. 

\section{The \ksmt algorithm}

Our \ksmt algorithm will be based on a CDCL-type calculus~\cite{DBLP:journals/tc/Marques-SilvaS99,DBLP:journals/jacm/NieuwenhuisOT06} and is in the spirit of Conflict Resolution~\cite{CR:KTV09,KorovinTV11}, Bound Propagation~\cite{KV11:BP,DBLP:conf/synasc/DraganKKV13}, GDPLL~\cite{McMillanEtAl:09}, MCSAT~\cite{MouraJ13} and related algorithms.

The \ksmt calculus will be presented as a set of transition rules that operate on the states introduced previously.  
The \D{initial state} is a state of the form $(\nil,\Lin, \NLin)$. A final state will be reached when no further \ksmt transition rules (defined below) are applicable.

Informally, the \ksmt algorithm starts with a formula in separated linear form and the empty trail, and extends the trail until the solution is found or a trivial inconsistency is derived by applying the \ksmt transition rules. 
During the extension process the algorithm may encounter conflicts which are resolved by deriving lemmas which will be linear clauses. These lemmas are either derived by resolution between two linear clauses or by linearisation of non-linear conflicts, which is described in detail in \Cref{sec:nlin}. One of the important properties of our calculus is that we only generate linear lemmas
during the run of the algorithm and the non-linear part $\NLin$ remains fixed.

\subsection{General procedure}
Let $(\alpha,\Lin,\NLin)$ be a \notfalse state and
$z\in V\setminus V(\alpha)$ be a variable unassigned in $\alpha$.
Assume there is no $q\in\mathbb Q$ such that $(\alpha\cons z\mapsto q,\Lin,\NLin)$ is \linnotfalse.
That means that for any potential assignment $q$ there is a clause $D\in\Lin$ not satisfied under $\alpha\cons z\mapsto q$. Another way of viewing this
situation, called a \emph{conflict}, is that there are clauses
consisting under $\alpha$ only of predicates linear in and only depending on $z$ that
contradict each other. Analogously to resolution in propositional
logic,
\[ \begin{array}{c}
	A\vee\ell\quad B\vee\neg\ell \\ \hline
	A\vee B
\end{array} \]
the following inference rule we call \emph{arithmetical resolution on $x$} is sound \cite{CR:KTV09,McMillanEtAl:09} on clauses over linear predicates:
\[
	\begin{array}{c}
	A\vee(cx+d\leq 0)\quad B\vee(-c'x+d'\leq0)
	\\ \hline
	A\vee B\vee (c'd+cd'\leq0)
	\end{array}
\]
where $c, c'$ are positive rational constants and $d,d'$ are linear terms.
Similar rules exist for strict comparisons. 
%
We denote by $R_{\alpha,\Lin,z}$ a set of resolvents of clauses in  $\Lin$ upon variable $z$ such that 
$\llbracket R_{\alpha,\Lin,z}\rrbracket^\alpha =\False$. In Section~\ref{sec:check-bounds} 
we discuss how to obtain such a set.

We consider the following rules for transforming states into states under some preconditions, i.e., the binary relation $\transition$ on states.
\begin{description}
\item[Assignment refinement:]
    In order to refine an existing partial assignment $\alpha$ by assigning $z\in V$
    to $q\in\mathbb Q$ in a state $(\alpha,\Lin,\NLin)$, the state needs to be \linnotfalse,
    that is, no clause over linear predicates in $\Lin$ must be false under $\alpha$.
Additionally, under this assignment the clauses over linear predicates in $\Lin$
    must be valid under the new assignment, formally: For any state
    $(\alpha,\Lin,\NLin)$, $z\in V$ and $q\in\mathbb Q$
    \begin{equation}\small
    	(\alpha,\Lin,\NLin)\transition (\alpha\cons z\mapsto q,\Lin,\NLin)
    	\label{rule:extend} \tag{A}
    \end{equation}
    whenever
    $\llbracket\Lin\rrbracket^\alpha\neq\False$,
    $z\notin V(\alpha)$, and
    $\llbracket\Lin\rrbracket^{\alpha\cons z\mapsto q}\neq\False$.
    In the linear setting of \cite{CR:KTV09}, this rule
    exactly corresponds to ``assignment refinement''.
\item[Conflict resolution:]
    Assume despite state $(\alpha,\Lin,\NLin)$ being \linnotfalse and $z\in V$ unassigned in
    $\alpha$ there is no rational value to assign to $z$ that makes the resulting
    state \linnotfalse. This means, that for any $q\in\mathbb Q$ there is a
    \emph{conflict}, i.e., a clause in $\Lin$ that is false under $\alpha\cons z\mapsto q$.
    In order to progress in determining \SAT or \UNSAT, the partial assignment
    $\alpha$ needs to be excluded from the search space. Arithmetical resolution
    $R_{\alpha,\Lin,z}$ provides exactly that: a set of clauses preventing any $\beta\supseteq\alpha$ from being \linnotfalse. For any state $(\alpha,\Lin,\NLin)$ and $z\in V$
    \begin{equation}\small
        (\alpha,\Lin,\NLin)\transition(\alpha,\Lin\cup R_{\alpha,\Lin,z},\NLin)
        \label{rule:resolve} \tag{R}
    \end{equation}
    whenever $\llbracket\Lin\rrbracket^\alpha\neq\False$,
    $z\notin V(\alpha)$ and
    $\forall q\in\mathbb Q:\llbracket\Lin\rrbracket^{\alpha\cons z\mapsto q}=\False$.
    In the linear setting of \cite{CR:KTV09}, this rule
    corresponds to 
    ``conflict resolution''.

\item[Backjumping:]
    In case the state $(\alpha,\Lin,\NLin)$ contains one or more top-level assignments that make it not \linnotfalse, these assignments are removed.
    This is commonly known as backjumping. Indeed, when transitioning to applying this rule, the information on the size of the suffix of assignments to remove is already available, as is
    detailed in \Cref{info:backtracking-eq-backjumping}. Formally, for a state $(\alpha,\Lin,\NLin)$ such that $\llbracket\Lin\rrbracket^{\alpha}=\False$, let $\gamma$ be the maximal prefix of $\alpha$ such that  $\llbracket\Lin\rrbracket^\gamma\neq\False$.
    Then, Backjumping is defined as follows:
    \begin{equation}
        (\alpha,\Lin,\NLin)
        \transition
        (\gamma,\Lin,\NLin)
        \label{rule:backtrack} \tag{B}
    \end{equation}

\item[Linearisation:]
    The above rules are only concerned with keeping the (partial) assignment
    \linnotfalse. This rule extends the calculus to ensure that the non-linear clauses in $\NLin$ are \notfalse as well.
    In essence, the variables involved
    in a non-linear conflict are ``lifted'' into the linear domain by a
    linearisation of the conflict local to $\alpha$. The resulting state will not
    be \linnotfalse as is shown in \Cref{lem:linearization-effect}. Formally,
    if $(\alpha,\Lin,\NLin)$ is a state and $L_{\alpha,\NLin}$ a non-empty set of linearisation
    clauses
    as detailed in \Cref{sec:nlin}, then the rule reads as
    \begin{equation}\small
        (\alpha,\Lin,\NLin)\transition (\alpha,\Lin\cup L_{\alpha,\NLin},\NLin)
    	\label{rule:linearize} \tag{L}
    \end{equation}
    whenever $\llbracket\Lin\rrbracket^\alpha\neq\False$ and
    $\llbracket\NLin\rrbracket^\alpha=\False$.
\end{description}
Let us note that the set $\NLin$ remains unchanged over any sequence of states
obtained by successive application of the above rules.

\begin{lemma}[Soundness]\label{lem:coherence}
Let $I$ be an input instance in separated linear form.
Let $(S_0,S_1,\ldots,S_n)$ be a sequence of states $S_i=(\alpha_i,\Lin_i,\NLin)$ where $S_0$
is the initial state and each $S_{i+1}$ is derived from $S_i$ by application of
one of the rules \eqref{rule:extend}, \eqref{rule:resolve}, \eqref{rule:backtrack},
\eqref{rule:linearize}.
\begin{enumerate}
\item\label{it:single-step-sound}
    For all $i<n$ and total assignments $\alpha:V\to\mathbb Q$:
    $\llbracket\Lin_i\land\NLin\rrbracket^{\alpha}=\llbracket\Lin_{i+1}\land \NLin\rrbracket^{\alpha}$.
\item
    If no rule is applicable to $S_n$ then the following are equivalent:
    \begin{itemize}
    \item $I$ is satisfiable,
    \item $\alpha_n$ is a solution to $I$,
    \item $S_n$ is \linnotfalse,
    \item the trivial conflict clause $(1\leq0)$ is not in $\Lin_n$.
    \end{itemize}
\end{enumerate}
\end{lemma}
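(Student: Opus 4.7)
The plan is to establish part~\ref{it:single-step-sound} by a case analysis on the rule applied at step $i$, then derive part~2 by combining this invariant with a careful analysis of which preconditions must simultaneously fail in a final state.

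For part~\ref{it:single-step-sound}, rules \eqref{rule:extend} and \eqref{rule:backtrack} modify only the trail and leave $\Lin_i = \Lin_{i+1}$, so the claim is immediate. For \eqref{rule:resolve}, the added clauses $R_{\alpha_i,\Lin_i,z}$ are arithmetical resolvents of pairs from $\Lin_i$; by the soundness of arithmetical resolution (cited above), each is a logical consequence of $\Lin_i$, so $\Lin_{i+1}$ and $\Lin_i$ admit the same total models and $\llbracket\Lin_{i+1}\land\NLin\rrbracket^\alpha = \llbracket\Lin_i\land\NLin\rrbracket^\alpha$. For \eqref{rule:linearize}, the clauses in $L_{\alpha_i,\NLin}$ will be constructed in \Cref{sec:nlin} as semantic consequences of $\NLin$, so adding them preserves the total models of $\Lin_i\land\NLin$.

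For part~2, chaining part~\ref{it:single-step-sound} across the trace shows that $I = \Lin_0\land\NLin$ and $\Lin_n\land\NLin$ share the same set of total rational solutions, so satisfiability of $I$ is equivalent to satisfiability of $\Lin_n\land\NLin$. I would then split on whether $\llbracket\Lin_n\rrbracket^{\alpha_n}=\False$. In the linearly conflict-free case, inapplicability of \eqref{rule:linearize} forces $\llbracket\NLin\rrbracket^{\alpha_n}\neq\False$, and a mutual-exclusion argument does the rest: for any unassigned $z\in V\setminus V(\alpha_n)$, inapplicability of \eqref{rule:extend} forces $\llbracket\Lin_n\rrbracket^{\alpha_n\cons z\mapsto q}=\False$ for every $q\in\mathbb Q$, which is precisely the precondition of \eqref{rule:resolve}, contradicting its inapplicability. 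Hence $\alpha_n$ must be total, satisfies both $\Lin_n$ and $\NLin$, and is therefore a solution to $I$; in particular $I$ is satisfiable and $(1\leq 0)\notin\Lin_n$ since this clause is false under every assignment.

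In the opposite case $\llbracket\Lin_n\rrbracket^{\alpha_n}=\False$, inapplicability of \eqref{rule:backtrack} means no prefix of $\alpha_n$ is linearly conflict-free either; in particular $\llbracket\Lin_n\rrbracket^\nil=\False$, which forces some clause of $\Lin_n$ to consist entirely of ground-false literals, i.e.\ (up to normalisation) the trivial conflict clause $(1\leq 0)\in\Lin_n$. Then $\Lin_n\land\NLin$ and hence $I$ are unsatisfiable, and $\alpha_n$ is plainly not a solution. Combining the two cases, all four conditions hold together in the first and fail together in the second, which gives the claimed equivalence. The main obstacle is phrasing the mutual-exclusion argument between \eqref{rule:extend} and \eqref{rule:resolve} cleanly; this is exactly what prevents the calculus from getting stuck at a conflict-free partial assignment, and without it one could terminate without delivering either a solution or a refutation.
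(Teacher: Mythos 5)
Your proof is correct and takes essentially the approach the paper intends: part~1 by a rule-wise case analysis, using soundness of arithmetical resolution for \eqref{rule:resolve} and the consequence property of linearisations (\Cref{def:linearisation}, \Cref{lem:linearization-effect}) for \eqref{rule:linearize}, and part~2 by chaining the invariant and showing that a stuck state is either a total, \notfalse assignment (since inapplicability of \eqref{rule:extend} on an unassigned variable would trigger \eqref{rule:resolve}, and of \eqref{rule:linearize} forces $\llbracket\NLin\rrbracket^{\alpha_n}\neq\False$) or contains a ground-false clause. Your one appeal to identifying a ground-false clause with $(1\leq0)$ ``up to normalisation'' matches the normalisation convention the paper itself relies on, so there is no gap.
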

\begin{lemma}[Progress]\label{lem:reduce-search-space}%
Let $(S_i)_i$ be a sequence of states $S_i=(\alpha_i,\Lin_i,\NLin)$ produced from initial state $S_0$ by
the \ksmt rules, $n$ be the number of variables 
and
\[ \Lambda_i\coloneqq\{\alpha:\eqref{rule:extend}~\text{cannot be applied to}~(\alpha,\Lin_i,\NLin)~\text{linearly conflict-free}\}\text. \]
Then $\Lambda_i\supseteq\Lambda_{i+1}$ and $\Lambda_i\neq\Lambda_{i+n+2}$ hold for all $i$.
\end{lemma}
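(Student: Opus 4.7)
The plan is to prove the two claims separately by case analysis and a counting argument. The inclusion $\Lambda_i\supseteq\Lambda_{i+1}$ follows from a case split on the rule applied in transitioning $S_i$ to $S_{i+1}$; the strict difference $\Lambda_i\neq\Lambda_{i+n+2}$ combines a bound on the length of any pure \eqref{rule:extend}/\eqref{rule:backtrack} run with the observation that each learnt lemma eliminates the current trail from $\Lambda$.

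For monotonicity, the rules partition by whether they affect $\Lin$. Steps using \eqref{rule:extend} or \eqref{rule:backtrack} leave $\Lin$ unchanged, so the property defining $\Lambda_i$, which is phrased entirely in terms of linear conflict-freeness with respect to $\Lin_i$, does not move. Steps using \eqref{rule:resolve} or \eqref{rule:linearize} strictly grow $\Lin$ ($\Lin_i\subseteq\Lin_{i+1}$), and the elementary fact that $\llbracket\Lin_{i+1}\rrbracket^\beta\neq\False$ implies $\llbracket\Lin_i\rrbracket^\beta\neq\False$ for every assignment $\beta$ propagates through the definition of $\Lambda$: any $\alpha$ admitting a linearly conflict-free extension under $\Lin_{i+1}$ also admits one under $\Lin_i$, giving $\Lambda_i\supseteq\Lambda_{i+1}$.

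For strict progress I first argue that any window of $n+2$ consecutive transitions must contain an application of \eqref{rule:resolve} or \eqref{rule:linearize}. Each \eqref{rule:extend} raises $|V(\alpha)|$ by one and $|V(\alpha)|\le n$, so at most $n$ consecutive \eqref{rule:extend}'s are possible. The precondition of \eqref{rule:backtrack} is $\llbracket\Lin\rrbracket^\alpha=\False$, while both \eqref{rule:extend} and \eqref{rule:backtrack} produce linearly conflict-free states: the former by its own precondition, the latter by the maximality of the prefix $\gamma$ in the rule's definition. Since $\Lin$ stays constant along a pure \eqref{rule:extend}/\eqref{rule:backtrack} run, \eqref{rule:backtrack} cannot fire immediately after \eqref{rule:extend} or \eqref{rule:backtrack}; combined, such a run has length at most $n+1$, forcing an application of \eqref{rule:resolve} or \eqref{rule:linearize} within $n+2$ steps. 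At that step, $\alpha$ satisfies $\llbracket\Lin_i\rrbracket^\alpha\neq\False$ by the rule precondition, but $\Lin_{i+1}$ contains either the resolvents $R_{\alpha,\Lin_i,z}$, for which $\llbracket R_{\alpha,\Lin_i,z}\rrbracket^\alpha=\False$ by construction, or the linearisation $L_{\alpha,\NLin}$, for which $\llbracket\Lin_{i+1}\rrbracket^\alpha=\False$ by \Cref{lem:linearization-effect}. In either case $\alpha$ drops out of $\Lambda_{i+1}$ while it was a member of $\Lambda_i$, witnessing $\Lambda_i\neq\Lambda_{i+n+2}$.

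The main obstacle will be the combinatorial bookkeeping establishing that \eqref{rule:backtrack} truly cannot fire within a pure \eqref{rule:extend}/\eqref{rule:backtrack} run without a prior lemma addition, and pinning down the exact constant $n+2$ rather than a loose polynomial bound; both reduce to careful invocation of the rules' preconditions. A subsidiary point is to verify, for each of \eqref{rule:resolve} and \eqref{rule:linearize}, that the added clauses are genuinely novel with respect to the current trail in the sense that they strictly remove $\alpha$ from $\Lambda$, which follows directly from the construction of the resolvents and of the linearisation clauses.
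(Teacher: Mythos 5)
Your argument is essentially the paper's own proof, which the paper gives only as a three-point sketch: \eqref{rule:extend} and \eqref{rule:backtrack} leave $\Lin$ (hence $\Lambda$) unchanged, \eqref{rule:extend} can fire at most $n$ times consecutively, \eqref{rule:backtrack} only fires after \eqref{rule:resolve} or \eqref{rule:linearize}, and each \eqref{rule:resolve}/\eqref{rule:linearize} step strictly reduces $\Lambda$ via the defining property of $R_{\alpha,\Lin,z}$ and \Cref{cor:lin-red-search-space} (you cite \Cref{lem:linearization-effect} directly, which is the same fact). You merely spell out the bookkeeping the paper leaves implicit --- why a maximal pure \eqref{rule:extend}/\eqref{rule:backtrack} run has length at most $n+1$ --- so the two proofs coincide in substance.
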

The proofs follow from the following:
\begin{enumerate}
\item \eqref{rule:extend} does not change
$\Lambda$ and can be applied consecutively at most $n$ times,
\item after application of \eqref{rule:resolve} or \eqref{rule:linearize}
the set $\Lambda$ is reduced which follows
from the properties of the resolvent, and \Cref{cor:lin-red-search-space} respectively, and
\item \eqref{rule:backtrack} does not change $\Lambda$ and can be applied only after \eqref{rule:resolve} or \eqref{rule:linearize}.
\end{enumerate}
\begin{corollary}
After at most $n+2$ steps
the search space is reduced.
\end{corollary}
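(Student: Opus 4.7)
The corollary is a direct consequence of the Progress Lemma, so the plan is essentially to unpack what the two assertions of that lemma jointly imply. First I would observe that the chain of inclusions
\[
\Lambda_i \supseteq \Lambda_{i+1} \supseteq \cdots \supseteq \Lambda_{i+n+2}
\]
obtained by iterating the first assertion, combined with the strict inequality $\Lambda_i \neq \Lambda_{i+n+2}$ from the second assertion, forces at least one step in this chain to be strict. Hence there exists an index $j$ with $i \leq j < i+n+2$ such that $\Lambda_j \supsetneq \Lambda_{j+1}$, which is exactly the statement that within $n+2$ consecutive steps the set of partial assignments that still need to be excluded shrinks.

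The real substance of course sits in the Progress Lemma, whose three enumerated facts I would verify as follows. For item (1), I would note that \eqref{rule:extend} only extends $\alpha$ by one assignment and leaves $\Lin$ and $\NLin$ unchanged, so the predicate defining $\Lambda$ is unchanged; and since every application of \eqref{rule:extend} assigns a previously unassigned variable from the finite set $V$ of size $n$, at most $n$ consecutive applications are possible before \eqref{rule:extend} is no longer applicable. For item (3), I would recall that \eqref{rule:backtrack} only removes top-level assignments without modifying $\Lin$ or $\NLin$, hence $\Lambda$ is unchanged; moreover, \eqref{rule:backtrack} requires $\llbracket\Lin\rrbracket^{\alpha}=\False$, a situation which cannot arise from \eqref{rule:extend} (whose precondition excludes it) nor be initial, and so must be preceded by an application of \eqref{rule:resolve} or \eqref{rule:linearize}.

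For item (2), the reduction of $\Lambda$ after \eqref{rule:resolve} follows from the defining property $\llbracket R_{\alpha,\Lin,z}\rrbracket^\alpha = \False$ of the resolvent set, which ensures that the current $\alpha$ (and every extension agreeing with it on the already assigned variables) is now excluded from being linearly conflict-free, so it is in $\Lambda_i$ but not in $\Lambda_{i+1}$. The analogous reduction for \eqref{rule:linearize} is deferred to \Cref{cor:lin-red-search-space}, whose statement is precisely what is needed here.

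The main obstacle is item (2) for linearisation: one needs the fact, proved later, that the added clauses $L_{\alpha,\NLin}$ falsify $\alpha$ under $\Lin\cup L_{\alpha,\NLin}$, turning a linearly conflict-free but non-linearly false state into a linearly conflicting one; the rest of the bookkeeping combining the three items into the chain-length bound $n+2$ (at most $n$ consecutive \eqref{rule:extend} steps, followed by one reducing \eqref{rule:resolve}/\eqref{rule:linearize} step, optionally followed by one \eqref{rule:backtrack} step) is routine once that ingredient is available.
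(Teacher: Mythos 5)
Your proposal is correct and takes essentially the same route as the paper: the corollary is read off directly from the two assertions of the Progress Lemma (the inclusion chain plus $\Lambda_i\neq\Lambda_{i+n+2}$ forcing a strict step within $n+2$ transitions), and your justification of that lemma mirrors the paper's own three-point argument — at most $n$ consecutive applications of \eqref{rule:extend} leaving $\Lambda$ unchanged, reduction after \eqref{rule:resolve} or \eqref{rule:linearize} via the resolvent property $\llbracket R_{\alpha,\Lin,z}\rrbracket^\alpha=\False$ and \Cref{cor:lin-red-search-space}, and \eqref{rule:backtrack} leaving $\Lambda$ unchanged and only following a conflict rule. No further comment is needed.
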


\subsection{Concrete algorithm}\label{sec:conrete:alg}
%
The algorithm transforms the initial state by applying
\ksmt transition rules exhaustively. 
The rule applicability graph is shown in \Cref{fig:rule-transitions}.
The rule \eqref{rule:backtrack} is applicable whenever the linear part is false
in the current assignment. This is always the case after applications of  either \eqref{rule:resolve} or \eqref{rule:linearize}.
In order to check applicability of remaining rules 
\eqref{rule:extend},  \eqref{rule:resolve} and \eqref{rule:linearize} the
following conditions need to be checked. 
\begin{enumerate}
\item
    Is the state \notfalse? \label{comp:consistency} 
    In particular, we need to check whether the non-linear part evaluates to \False under the current assignment.
    Decidability of this problem for the broad class of functions $\Fdec$
    is shown in \Cref{decide:nonlinear}, along with concrete algorithms for common classes of non-linear functions.
\item\label{comp:conflict}
    If the state is \linnotfalse and a variable is chosen, can it be assigned in a way that the linear part remains conflict-free?
    A polynomial-time procedure is described
    in \Cref{sec:check-bounds}.
\end{enumerate}

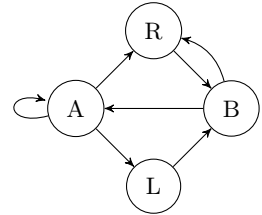
\begin{wrapfigure}[9]{R}{3.9cm}
\centering
\vspace{-2\baselineskip}
\begin{tikzpicture}[->,>=stealth',auto,node distance=4.5em]
\footnotesize
\tikzstyle{every node}=[draw,circle,inner sep=1.1ex]

\node (A)                    {\ref{rule:extend}};
\node (R) [above right of=A] {\ref{rule:resolve}};
\node (B) [below right of=R] {\ref{rule:backtrack}};
\node (L) [below right of=A] {\ref{rule:linearize}};

\path (B) 
          edge [bend right] (R)
          edge              (A)
      (R) edge              (B)
      (L) edge              (B)
      (A) edge [loop left]  (A)
          edge              (R)
          edge              (L)
;
\end{tikzpicture}
\vspace{-.7\baselineskip}
\caption{Transitions between applicability of rules.}
\label{fig:rule-transitions}
\end{wrapfigure}
These computations determine whether
\eqref{rule:extend}, \eqref{rule:resolve} or \eqref{rule:linearize} is applicable next.
\Cref{comp:conflict} has to be checked after each application of
\eqref{rule:extend} and \eqref{rule:backtrack}.
\label{info:backtracking-eq-backjumping}
Note that in case of transitioning to an application of rule \eqref{rule:backtrack} 
the size of the suffix of assignments to revoke
is syntactically available in form of the
highest position in $\alpha$ of a variable in $R_{\alpha,\Lin,z}$ or the
linearisation $L_{\alpha,\NLin}$, respectively.

Let us note that the calculus allows for flexibility in the choices of:
\begin{enumerate}
\item\label{it:heur:extend}
    The variable $z$ and value $q$ to assign to $z$ when applying rule \eqref{rule:extend}.
\item\label{it:heur:conflict}
    Which arithmetical resolutions to perform when applying rule \eqref{rule:resolve}.
\item\label{it:heur:linearize}
    Which linearisations to perform when applying rule \eqref{rule:linearize}.
    We describe the general conditions in \Cref{sec:nlin} and our approach in \Cref{impl:nonlinear}.
\end{enumerate}
Many of the heuristics presented in \cite{KorovinTV11,DBLP:conf/synasc/DraganKKV13} are applicable to \cref{it:heur:extend,it:heur:conflict} as well.

\subsection{Determining bounds and resolvents}\label{sec:check-bounds}

In this section we consider the problem of checking whether we can 
extend the trail of a linearly conflict-free state in such a way that the
linear part remains conflict-free after the extension and in this case we
apply rule \eqref{rule:extend}, or otherwise there is a conflict which should be
resolved by applying rule \eqref{rule:resolve}.

Given a \linnotfalse state $(\alpha,\Lin,\NLin)$ and a variable $z$ unassigned in $\alpha$, the problem
\[ \exists q\in\mathbb Q:\llbracket\Lin\rrbracket^{\alpha\cons z\mapsto q}\neq\False \]
can be solved efficiently 
by the following algorithm.
Let $\Lin_{z,\alpha}$ be those partially applied (by $\alpha$) clauses from $\Lin$ that only depend on $z$. The other clauses are either
already satisfied or depend on a further unassigned variable.
So each $D\in \Lin_{z,\alpha}$ is `univariate',  i.e. just a set of $z\diamond c_i$. The disjunction
of these simple predicates in $D$ is  equivalent to
a clause of the form (i) $z<a \vee z>b$, perhaps with non-strict inequalities,
giving an alternative between a lower and an upper bound or (ii)
a unit clause for a lower bound, or (iii)
a unit clause for an upper bound, or (iv) an arithmetic tautology.
So each clause is equivalent to the union of
at most two
half-bounded rational intervals. 
The conjunction of two such clauses corresponds to 
the intersection of sets of intervals, which is again 
a set of intervals. This intersection can be computed 
easily and can also
be checked for emptiness. In case the intersection is not empty, it even gives us intervals
to choose an assignment $q$ for $z$ with $\llbracket\Lin\rrbracket^{\alpha\cons z\mapsto q}\neq\False$.
If the intersection is empty,\pagebreak[2] we know there is no such $q$ and
we can use arithmetical resolution to resolve this conflict to obtain $R_{\alpha,\Lin,z}$.
\subsection{Non-linear predicates}\label{sec:nlin}
While resolution is a well-established and efficient symbolic technique
for dealing with the linear part of the CNF under consideration,
there seem to be no similarly easy techniques for non-linear predicates.
The approach presented here is based on numerical approximations instead.

Given a \linnotfalse state $(\alpha,\Lin,\NLin)$, in order to decide on the
applicability of 
\eqref{rule:linearize}, the
non-linear unit clauses in $\NLin$ have to be checked for validity under $\alpha$.
If all are valid, then, by definition, $(\alpha,\Lin,\NLin)$ is \notfalse.
\Cref{lem:pred-eval-computable} gives sufficient conditions on the non-linear 
functions in $\Fun_\nonlin$ in order to make this problem decidable.
In this section, we will describe how we deal with the case that 
some unit clause $\{P\}\in\NLin$ is \False under $\alpha$,
where according to \eqref{rule:linearize}
we construct a linearisation of $P$ with respect to $\alpha$.
We will not need the order of variables given in the trail $\alpha$,
so we will only use $\alpha$ as a partial assignment.
\begin{definition}\label{def:linearisation} Let $P$ be a non-linear predicate and let $\alpha$ be 
a partial assignment with $\llbracket P\rrbracket^\alpha=\False$.
An \emph{$(\alpha,P)$-linearisation} is a clause $L_{\alpha,P}= \{L_i : i\in I\}$  
consisting of finitely many rational linear predicates  $(L_i)_{i\in I}$  with the
properties
\begin{enumerate}
\item $\{\beta : \llbracket P\rrbracket^\beta=\True\}  \subseteq
\{\beta : \llbracket L_{\alpha,P}\rrbracket^\beta=\True\}$, and
\item $\llbracket L_{\alpha,P}\rrbracket^\alpha=\False$.
\end{enumerate}
\end{definition}
If we let $\vec c_\alpha$ denote the values assigned in $\alpha$ and
$\vec x$ the vector of assigned variables, we can reformulate the properties of
$ L_{\alpha,P}$ as a formula:
\[  \Big(P \implies \bigvee_{i\in I}  L_i \Big)\wedge\Big( \vec x = \vec c_\alpha \implies
\neg  \bigvee_{i\in I}  L_i \Big)
\]
This formula will not be added to the system but is just used as a basis for discussions. Later we will use 
a similar formalism to define linearisation clauses.

A central idea of our approach is to add $ L_{\alpha,P}$ as a new clause to the CNF, 
as well as the 
predicates $L_i$.
Adding  $ L_{\alpha,P}$  is sound, as the following lemma shows:
\begin{lemma}\label{lem:linearization-effect}
Suppose a partial assignment $\alpha$ violates a predicate $P$ with $\{P\}\in \NLin$, 
so $\llbracket P\rrbracket^\alpha=\False$. 
Further suppose $ L_{\alpha,P}$ is an $(\alpha,P)$-linearisation.
\begin{enumerate}
    \item Any $\beta$, which is a solution for $\Lin\cup\NLin$, is also 
    a solution for $\Lin\cup\{ L_{\alpha,P}\}\cup\NLin$.
    \item $(\alpha,\Lin\cup\{ L_{\alpha,P}\},\NLin)$ is not \linnotfalse.
\end{enumerate}
\end{lemma}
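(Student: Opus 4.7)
The plan is to prove the two claims essentially directly from the two defining properties of an $(\alpha,P)$-linearisation in \Cref{def:linearisation}, unwinding how CNF evaluation interacts with extensions of the clause set. Both parts are short; the only subtle point is making sure the definitions of ``solution'' and ``\linnotfalse'' are used exactly as stated.

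For claim~(1), I would take a total assignment $\beta$ that solves $\Lin\cup\NLin$. Since $\{P\}\in\NLin$ is a unit clause, the solution condition forces $\llbracket P\rrbracket^\beta=\True$. Property~1 of \Cref{def:linearisation} gives the inclusion
\[ \{\beta:\llbracket P\rrbracket^\beta=\True\}\subseteq\{\beta:\llbracket L_{\alpha,P}\rrbracket^\beta=\True\}, \]
so $\llbracket L_{\alpha,P}\rrbracket^\beta=\True$ as well. Hence $\beta$ satisfies the additional clause $L_{\alpha,P}$, and already satisfies the rest of $\Lin\cup\NLin$, so $\beta$ is a solution to $\Lin\cup\{L_{\alpha,P}\}\cup\NLin$.

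For claim~(2), \linnotfalsity of $(\alpha,\Lin\cup\{L_{\alpha,P}\},\NLin)$ means by definition that $\llbracket\Lin\cup\{L_{\alpha,P}\}\rrbracket^\alpha\neq\False$. Since a CNF evaluates to $\False$ as soon as any of its clauses does, it suffices to exhibit one clause falsified by $\alpha$. Property~2 of \Cref{def:linearisation} provides exactly this: $\llbracket L_{\alpha,P}\rrbracket^\alpha=\False$. Therefore $\llbracket\Lin\cup\{L_{\alpha,P}\}\rrbracket^\alpha=\False$, and the resulting state is not \linnotfalse.

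There is no real obstacle here, but I would take care that $L_{\alpha,P}$ is a \emph{clause} (a disjunction of the $L_i$), so that ``$\llbracket L_{\alpha,P}\rrbracket^\alpha=\False$'' means every disjunct is false under $\alpha$, and ``$\llbracket L_{\alpha,P}\rrbracket^\beta=\True$'' means at least one disjunct holds under $\beta$; the argument above uses only these two facts via the two set-theoretic properties in \Cref{def:linearisation}. In particular, no further assumption on the structure of the $L_i$ (e.g.\ how they are produced numerically) is needed at this point: the soundness of adding $L_{\alpha,P}$ and its role as a conflict driving \eqref{rule:backtrack} follow purely from the abstract linearisation conditions.
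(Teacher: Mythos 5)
Your proof is correct and follows essentially the same route the paper intends: both claims are immediate consequences of the two defining properties of an $(\alpha,P)$-linearisation in \Cref{def:linearisation}, with claim~(1) using the implication $P\Rightarrow L_{\alpha,P}$ on the unit clause $\{P\}\in\NLin$ and claim~(2) using $\llbracket L_{\alpha,P}\rrbracket^\alpha=\False$ to falsify the extended linear clause set. Your remark that $L_{\alpha,P}$ is a clause of linear predicates (hence always defined under any total assignment) is exactly the right care to take; nothing further is needed.
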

\begin{corollary}\label{cor:lin-red-search-space}
Whenever $\eqref{rule:linearize}$ is applied,
the search space is reduced.
\end{corollary}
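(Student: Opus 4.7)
The plan is to derive the corollary directly from the two clauses of Lemma~\ref{lem:linearization-effect} together with the definition of $\Lambda_i$ in Lemma~\ref{lem:reduce-search-space}. First I would record that the precondition for firing \eqref{rule:linearize} on state $(\alpha,\Lin,\NLin)$ is $\llbracket\Lin\rrbracket^\alpha\neq\False$, so the current partial assignment $\alpha$ is linearly conflict-free and hence lies outside $\Lambda_i$; in particular $\alpha$, and every total extension of it, is still a legitimate candidate for continuation via rule \eqref{rule:extend} before the rule fires.

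Next I would apply Lemma~\ref{lem:linearization-effect}(2) to the post-state: after adding $L_{\alpha,P}$ to $\Lin$, we have $\llbracket\Lin\cup\{L_{\alpha,P}\}\rrbracket^\alpha=\False$ because $\llbracket L_{\alpha,P}\rrbracket^\alpha=\False$ by the second property of Definition~\ref{def:linearisation}. The key point I would flag here is that $L_{\alpha,P}$ is a linear clause whose variables are entirely among those assigned by $\alpha$ (otherwise the evaluation could not have produced $\False$), so no extension $\beta\supseteq\alpha$ can ever turn this clause true. Consequently every such $\beta$ lands in $\Lambda_{i+1}$, witnessing the strict growth of the excluded set (equivalently, the strict shrinkage of the candidate region that \eqref{rule:extend} could still explore).

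Finally I would invoke Lemma~\ref{lem:linearization-effect}(1) to argue that this shrinkage is \emph{safe}: any solution $\beta$ of $\Lin\cup\NLin$ remains a solution of $\Lin\cup\{L_{\alpha,P}\}\cup\NLin$, because the first property in Definition~\ref{def:linearisation} forces $L_{\alpha,P}$ to hold wherever $P$ holds. Hence the newly excluded region contains no lost solutions, and the corollary follows by unwinding the definition of $\Lambda_i$.

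The main subtlety, rather than a genuine obstacle, is connecting the informal phrase ``search space is reduced'' in the corollary statement with the precise set $\Lambda_i$ from Lemma~\ref{lem:reduce-search-space} and justifying the propagation from ``$\alpha$ is now excluded'' to ``every total extension of $\alpha$ is excluded''. This propagation relies on the ground-ness of $L_{\alpha,P}$ with respect to $\alpha$, which is implicit in Definition~\ref{def:linearisation} and which I would make explicit as a brief lemma-free observation before concluding.
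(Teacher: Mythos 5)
Your argument is correct and follows essentially the same route as the paper, which treats the corollary as an immediate consequence of Lemma~\ref{lem:linearization-effect}(2): after adding $L_{\alpha,P}$ the state is no longer \linnotfalse, so $\alpha$ and all its extensions are removed from the search space, while part (1) guarantees no solutions are lost. Your explicit remark that falsity of $L_{\alpha,P}$ under $\alpha$ persists for every extension $\beta\supseteq\alpha$ just spells out what the paper leaves implicit, so there is no substantive difference.
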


Hence at least the partial assignment $\alpha$ (and all extensions thereof) are removed
from the search space for the linear part of our CNF, at the cost of
adding the clause $L_{\alpha,P}$ usually containing several new linear predicates $L_i$. In general, our linearisations will not just remove single points but rather polytopes from the search space.

We should emphasise several remarks on the linearisations:
There is a high degree of freedom when choosing a linearisation for a
given pair $(\alpha,P)$. Techniques for constructing these will be discussed
in \Cref{impl:nonlinear}. They will all be based
on numerical approximations.

Furthermore we are allowed to add more than one clause in one step, 
so we can construct several linearisations for different $(\alpha,P')$ 
as long as $\llbracket P'\rrbracket^\alpha=\False$, and then add all of them. 
This has already been formulated in \eqref{rule:linearize} as a \emph{set} of linearisation clauses $L_{\alpha,\NLin}$
instead of a single clause $L_{\alpha,P}$.

\section{Example}\label{example:complete}

As a basic example describing our method 
we consider the conjunction of the non-linear predicate $P:(x\leq\frac1y)$, 
and linear constraints $L_1: (x \geq y/4 +1) $ and $L_2: (x\leq 4\cdot(y-1))$, shown on Figure~\ref{fig:unsat-example}.
We will first detail on how linearisations can 
 be constructed 
numerically for $P$.
In \Cref{impl:nonlinear} we will detail on how linearisations can be constructed in general.

\paragraph{Linearisation of $P$.}
\label{linear:product}

\begin{figure}[t]
\vspace{-\baselineskip}
\includegraphics[height=.18\textheight]{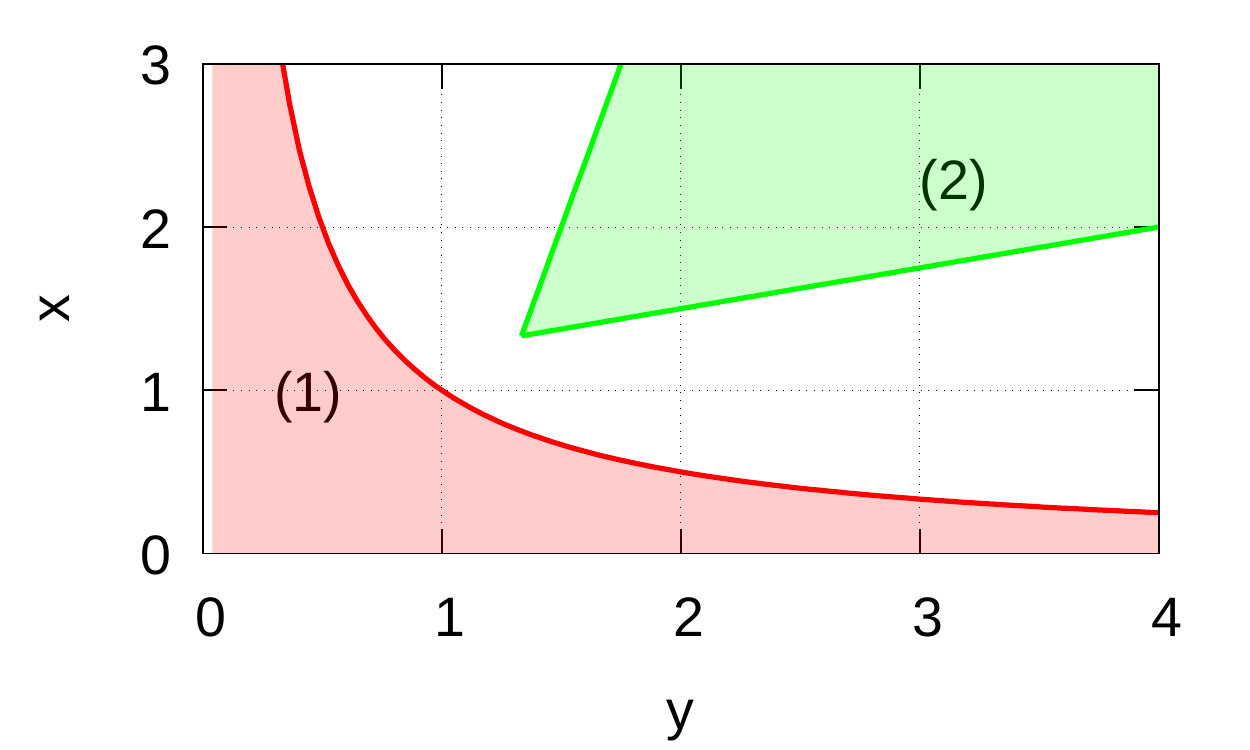}
\hfill
\includegraphics[height=.18\textheight]{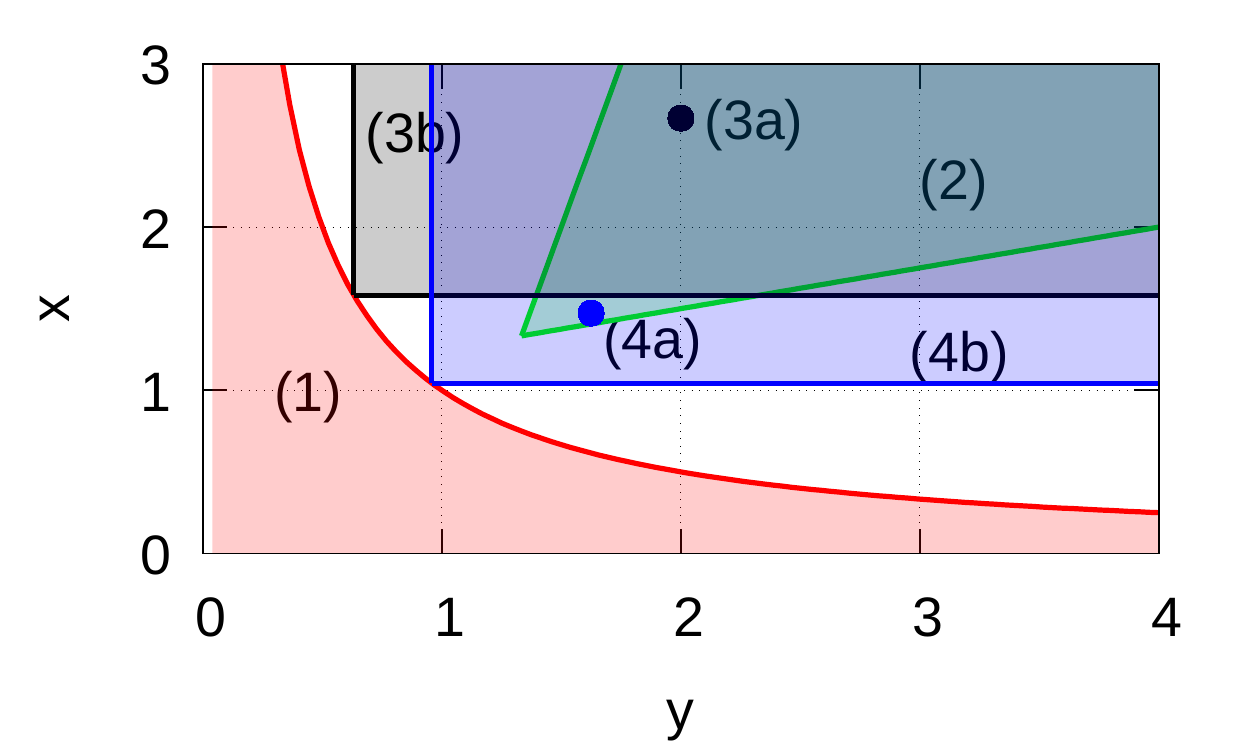}
\vspace{-\baselineskip}
\caption{Initial system and linearisations constructed.}
\label{fig:unsat-example}
\end{figure}

Assume $\llbracket P\rrbracket^\alpha=\False$ under assignment $\alpha$.
By definition, $\alpha$ assigns $(x,y)$ to some values $(c_x,c_y)$ such that
$c_x>\nicefrac1{c_y}$,  (point (3a), at $(\nicefrac83,2)$). Here we will only discuss the case $c_y>0$ needed below.
The other cases can be dealt with in a similar way.
To construct an $(\alpha,P)$-linearization, first we compute the rational number $d$ such that $1/c_y < d < c_x$.
In this example, we take $d\coloneqq(c_x+\nicefrac{1}{c_y})/2$, that is, for this linearisation $\nicefrac{19}{12}\approx1.58$.
%
In general, such values are computed by numerical approximations to the function value.
Then the clause $L_{\alpha,P}=\{ x \leq d, y \leq 1/d\}$ is the required linearisation (which excludes region (3b) containing the conflicting assignment).
Indeed,  $L_{\alpha,P}$ is implied by $P$ and  $\llbracket L_{\alpha,P}\rrbracket^\alpha=\False$.   

After adding $L_{\alpha,P}$ to the linear constraints, region (3b) is excluded from the search space
and backjumping to the empty assignment is performed
(since $\nicefrac83$ is not a linearly conflict-free assignment to $x$ anymore).
The system again is linearly conflict-free.
In the next iteration we obtain a solution (4a) roughly at $(1.47,1.63)$ to the new linear system,
linearisation at (4a) results in linear lemma excluding region (4b) where $d\approx1.04$.
Finally, the resulting linear constraints are unsatisfiable and therefore the
original system is proven to be also unsatisfiable.
This example is based on an actual run of our system.%
\section{Schemes for local linearisations}\label{sec:impl}

A successful linearisation scheme has to fulfil two tasks:
\hypertarget{it:nlin:dec}{(a)} deciding whether  a trail $\alpha$ is in conflict with a non-linear
predicate $P$ and then, if there is a conflict,
\hypertarget{it:nlin:linearize}{(b)} finding reasonable linearisations $L_{\alpha,P}$.
We first address task \hyperlink{it:nlin:dec}{(a)}.

\subsection{Deciding non-linear conflicts\label{decide:nonlinear}}

By  \Cref{def:separated}, $P$ is of the form $x\diamond f(\vec t)$, where
$f$ is a function symbol, $\vec t$ is a vector of terms, 
and $\diamond \in \{{<},{\leq},{>},{\geq}\}$. 
In the following assume that the terms in $\vec t$ 
use the variables $(y_1,\ldots,y_k)=\vec y\in V^k$. 
So the semantical interpretation $\llbracket f(\vec t)\rrbracket$ of the syntactical
term $f(\vec t)$ is a function $g:\R^k\to \R$.

In order to introduce the class $\Fdec$ we use the following notion of
approximable function.
\begin{definition}
We call a partial function $g:\mathbb R\to\mathbb R$ \emph{approximable} if
the set
\[ \square_g\coloneqq\{(p,q,s,t):g([p,q])\subset(s,t),\,p,q,s,t\in\mathbb Q\} \]
is computably enumerable.
Here, $g(I)$ denotes the set-evaluation of
$g$ on $I$, that is, $\{g(x):x\in I\cap\dom g\}$.
\end{definition}
This definition can easily be generalized to the multi-variate case by taking boxes
$[p_1,q_1]\times\cdots\times[p_k,q_k]$ with $\vec p,\vec q\in\mathbb Q^k$.
For 
total continuous real functions, approximability coincides with
the notion of computability known from Computable Analysis (
TTE) \cite{We00,BHW07}.

Given a number $d\in\Q$ and a vector $\vec c\in \Q^k$ with $d\neq g(\vec c)$
we can always decide whether $d \diamond g(\vec c)$ holds if $g:\mathbb R^k\to\mathbb R$ is
a total approximable function.
However, in general we cannot decide the premise $d\neq g(\vec c)$.
Therefore we restrict our considerations to a general class of functions where this problem is decidable.
\begin{definition}\label{def:F}
A partial function $g:\mathbb R^k\to\mathbb R$ is called a
\emph{function with decidable rational approximations}, denoted $g\in\Fdec$, if the following holds.
\begin{itemize}
\item $\dom(g)$ is decidable on $\Q^k$,
\item $\operatorname{graph}(g)$ is decidable on $\Q^k\times\Q$, and
\item $g$ is approximable.
\end{itemize}
\end{definition}

The following important classes of functions belong to $\Fdec$.

\begin{paragraph}{Multivariate polynomials.}
For  multivariate polynomials $g$ 
with rational coefficients, rational arguments are mapped to rational results using 
rational arithmetic and the relations $\diamond$ under consideration are decidable on $\mathbb Q^2$.
%
\end{paragraph}

\begin{paragraph}{Selected elementary transcendental functions.} 
Let $g\in\{\exp,\linebreak[1]\ln,\linebreak[1]\log_b,\linebreak[1]\sin,\linebreak[1]\cos,\linebreak[1]\tan,\linebreak[1]\arctan\}$, where in the case of $\log_b$, $b\in\mathbb Q$. 
Let us show that $g \in \Fdec$.
Indeed, it is well known that $g:\mathbb R\to\mathbb R$ is computable~\cite{We00}.
Since emptiness of $[p,q]\setminus\dom g$ is decidable, $g$ is also approximable.
In addition, $X_g\coloneqq\operatorname{graph}(g)\cap\mathbb Q^2$ either consists of
a single point, or in the case of $\log_b$, is of the form $X_g=\{(b^n,n):n\in\mathbb Z\}$~\cite{niven_1956} and therefore is decidable, as is the respective domain.
\end{paragraph}

\begin{paragraph}{%
Selected discontinuous functions.}
Additionally, $\Fdec$ includes some discontinuous functions like e.g.\ the step-functions taking rational values with discontinuities at finitely many rational points
and more generally piecewise polynomials defined over intervals with a decidable set of rational endpoints.
Multi-variate piecewise defined functions with
non-axis-aligned discontinuities are included as well.
\end{paragraph}

%
\begin{lemma}\label{lem:pred-eval-computable}
Let $P$ be a predicate over reals 
and let $\alpha$ be a trail assigning 
all variables used in $P$.
If $P$ is linear or $P:(x\diamond f(\vec t))$ 
with $\llbracket f(\vec t)\rrbracket\in\Fdec$ then
$\llbracket P\rrbracket^\alpha$ is computable.
\end{lemma}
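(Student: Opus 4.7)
The plan is to split on the syntactic form of $P$. If $P$ is linear, then $\llbracket P\rrbracket^\alpha$ is obtained by substituting the rational values $\alpha(x_i)$ into a rational linear expression, performing rational arithmetic, and comparing the result with $0$ --- all elementary effective operations. The interesting case is the non-linear one.

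Assume therefore $P:(x\diamond f(\vec t))$ with $g := \llbracket f(\vec t)\rrbracket \in \Fdec$, and set $d := \llbracket x\rrbracket^\alpha \in \Q$ and $\vec c := \llbracket \vec t\rrbracket^\alpha \in \Q^k$; these are rational because $\alpha$ assigns every occurring variable and, after the monotonic flattening of \Cref{lem:preproc}, the arguments of $f$ are linear terms over variables in $V(\alpha)$. I would then run three effective tests in sequence. First, decide $\vec c \in \dom(g)$ using the decidability of $\dom(g)$ on $\Q^k$ granted by \Cref{def:F}; if $\vec c$ lies outside the domain, then $f(\vec t)$ is undefined under $\alpha$ and so is $\llbracket P\rrbracket^\alpha$. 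Second, decide whether $(\vec c, d) \in \operatorname{graph}(g)$ using the decidability of the rational graph; a positive answer means $g(\vec c) = d$, and $\llbracket P\rrbracket^\alpha$ becomes the trivial rational comparison $d\diamond d$. Third, in the remaining case we know $g(\vec c)\neq d$, and I would enumerate the computably enumerable set $\square_g$ until the first tuple $(\vec p, \vec q, s, t)$ appears with $\vec p \leq \vec c \leq \vec q$ componentwise and $d \notin (s, t)$, reading off $\llbracket P\rrbracket^\alpha$ from whether $d \leq s$ (so $d < g(\vec c)$) or $d \geq t$ (so $g(\vec c) < d$).

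The main obstacle is to argue that the enumeration in the third step is guaranteed to terminate, since functions in $\Fdec$ need not be continuous and at a point of discontinuity no non-degenerate box around $\vec c$ need admit a tight enclosure. The crucial observation is that $\square_g$ also contains \emph{degenerate} one-point boxes: because $\vec c \in \dom(g) \cap \Q^k$ was established in step 1, for any rationals $s < g(\vec c) < t$ the tuple $(\vec c, \vec c, s, t)$ satisfies $g([\vec c, \vec c]) = \{g(\vec c)\} \subset (s, t)$ and therefore lies in $\square_g$. Since $g(\vec c)$ and $d$ are distinct real numbers and $\Q$ is dense in $\R$, one can select rationals $s, t$ with $g(\vec c) \in (s, t)$ and $d \notin (s, t)$, so at least one separating tuple exists and will eventually be produced by the enumeration, yielding the decision.
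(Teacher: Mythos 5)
Your proof is correct and follows essentially the same route as the paper: first decide $\vec c\in\dom g$ and rational-graph membership of $(\vec c,d)$ using the $\Fdec$ properties, then enumerate $\square_g$ for a box containing $\vec c$ whose value interval excludes $d$ and read off the comparison. The only divergence is that the paper dovetails this enumeration with a parallel search for a rational $q$ with $(\vec c,q)\in\operatorname{graph}(g)$ and splits its termination argument on whether $g(\vec c)$ is rational, whereas you observe (correctly) that since the graph test has already ruled out $g(\vec c)=d$, a separating degenerate point-box lies in $\square_g$ in every case, so that extra branch is not needed.
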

\pagebreak[2]
\begin{proof}
By definition, trails $\alpha$ contain rational assignments.
If $P$ is linear, there is nothing to show.
Let $P:(x\diamond f(\vec t))$ with $g(\vec y)=\llbracket f(\vec t)\rrbracket\in\Fdec$
where $\vec y$ is the vector of free variables in terms $\vec t$.
The cases $\llbracket\vec y\rrbracket^\alpha\notin\dom g$ and
$\llbracket(\vec y,x)\rrbracket^\alpha\in\operatorname{graph}(g)$ are decidable
by the definition of $\Fdec$.
	The remaining case is $\vec z\coloneqq\llbracket\vec y\rrbracket^\alpha\in\dom g$
    and $\llbracket(\vec y,x)\rrbracket^\alpha\notin\operatorname{graph}(g)$.
	Perform a parallel search for
	1)~$q\in\Q$ with $(\vec z,q)\in\operatorname{graph}(g)$ and for
	2)~a rational interval box $\vec I\times J$ in $\square_{\tilde g}$ with
	$\vec z\in\vec I$ and $\llbracket x\rrbracket^\alpha\notin J$.
	We now show that this search terminates.
	Either $g(\vec z)\in\mathbb Q$, then $q=g(\vec z)$ can be found in the
	graph of $g$, or
	$g(\vec z)\notin\mathbb Q$, then 
	$|\llbracket x\rrbracket^\alpha-g(\vec z)|>0$, thus
	there is a rational interval box $\vec I\times(s,t)\in\square_g$
	with $\vec z\in\vec I$ and $s,t\in\Q$
	such that $\llbracket x\rrbracket^\alpha\notin(s,t)$.
	Note that $\vec I$ can be the point-interval $[\vec z]$ since
	$\vec z\in\dom g$.
\end{proof}


In particular, if all predicates $P:(x\diamond f(\vec t))$ appearing in a given
problem instance are such that the function $\llbracket f(\vec t)\rrbracket$
used in this instance are from $\Fdec$, we can decide if a \ksmt state is
conflict-free as required in \Cref{sec:conrete:alg}.



\subsection{Linearisations for functions in $\Fdec$}\label{impl:nonlinear}

This section addresses task \hyperlink{it:nlin:linearize}{(b)},
namely  finding reasonable linearisations $L_{\alpha,P}$ in case a trail $\alpha$ 
is in conflict with a non-linear predicate $P$, that is,
$\llbracket P\rrbracket^\alpha=\False$.
In order to reduce the number of cases, we assume that the comparison operator $\diamond$ in $P:x\diamond f(\vec t)$ is from
$\{<,\leq\}$. The other two cases $\{>,\geq\}$ are symmetric.
 
Again let $g=\llbracket f(\vec t)\rrbracket:\R^k\to\R$ be the function represented by the term $f(\vec t)$. We assume that $g \in \Fdec$.
Let $c_x=\llbracket x\rrbracket^\alpha\in \Q$
and
$\vec c_{\vec y}=\llbracket \vec y\rrbracket^\alpha\in\Q^k$ 
be the values assigned by $\alpha$ to the free variables $\vec y$ in $\vec t$,
additionally, let $\vec c_{\vec y}\in\dom g$.
Furthermore let
$c_g=g(\vec c_{\vec y})=\llbracket f(\vec t)\rrbracket^\alpha\in \R$ be the value resulting from an 
exact evaluation of $g$.
Note that $c_g$ will only be used in the discussion and will not be added to the constraints, since in general $c_g\notin\mathbb Q$.
Then our assumption of an existing conflict 
$\llbracket P\rrbracket^\alpha=\False$ can be read as 
$c_x > c_g$ for $\diamond\in\{{\leq}\}$,
and as $c_x \geq c_g$ for $\diamond\in\{{<}\}$.
Let us note that $c_x$ and
$\vec c_{\vec y}$ are rational, but $c_g$ is a real number and usually irrational.
Since $g\in \Fdec$ 
we can compute approximations $\bar c_g\in\Q$ to $c_g$ with $|\bar c_g - c_g| \leq \varepsilon$ for any rational $\varepsilon>0$ using
\Cref{lem:pred-eval-computable}.

\pagebreak[2]

We now give a list of possible linearisations of $g$,
starting from trivial versions where we exclude just the conflicting point
$(c_x, \vec c_{\vec y})$ to more general linearisations excluding larger regions
containing this point.
\todofb{`special properties of $g$' are more explicit, helps understanding when which lin.\ is applicable.}
%
\begin{description}
\item[Point Linearisation:]
A trivial $(\alpha,P)$-linearisation excluding the point $(c_x, \vec c_{\vec y})$ is
\[(\vec y = \vec c_{\vec y}\implies x \not= c_x)\]
\item[Half-Line Linearisation:]
An $(\alpha,P)$-linearisation excluding a closed half-line starting in $c_x$ is
\[(\vec y = \vec c_{\vec y}\implies x < c_x)\]
\end{description}
In the following we will develop more powerful linearisations with the aim to exclude larger regions of the search space. 

For better linearisations, we can exploit additional
information about the predicate $P$ and the trail $\alpha$, especially about the
behaviour of $g$ in a region around $\vec c_{\vec y}$.
This information could be obtained by a per-case analysis on $\Fun_{\nonlin}$,
or during run time using external algebra systems or 
libraries for  exact real arithmetic or interval arithmetic
on the extended real numbers $\R\cup\{-\infty,+\infty\}$.
Our focus, however, is on the numerical and not the symbolical approach.

As we aim at linearisations, the regions should
have linear rational boundaries, so we concentrate on finite intersections of half-spaces:
%
%
\begin{definition}
An (open or closed) \emph{rational half-space} $H\subseteq \R^k$ is the solution set of a linear predicate $\vec a\cdot \vec y \leq b$ or  $\vec a\cdot \vec y < b$ for some $\vec a\in\mathbb Q^{k}$, $ b\in\mathbb Q$.
A \emph{rational polytope} $R\subseteq \R^k$ is a finite intersection of rational half-spaces.
\end{definition}
Any such polytope $R$ is a convex and possibly unbounded set and can be
represented as the conjunction of linear predicates over the variables $\vec y$. 
Therefore the complement $\R^k\setminus R$ can be represented as a linear clause $\{L_i:i\in I\}$ denoting
the predicate $\vec y\notin R$. For the ease of reading, instead of writing clauses like $\bigvee_{i\in I}L_i\lor D$ we will use
$\vec y\in R\implies D$ in the following.

Since $g\in \Fdec$ and approximable it follows that  for
any bounded rational polytope $R\subseteq \R^k$ 
in the domain of $g$ we can find 
arbitrarily precise rational over-approximations $(a,b)$ such that $g(R)\subset(a,b)$.


\begin{description}
\item[Interval Linearisation:]
Suppose we have $c_x\neq c_g$. By approximating $c_g$ we compute
$d\in\Q$ with $c_g < d < c_x$.
The proof of \Cref{lem:pred-eval-computable} provides an initial rational polytope
$R\in \R^k$  with $\vec c_{\vec y}\in R$
such that $d\not\in g(R)$.
Then
 \begin{equation}\label{eq:intlin}\small
 \vec y\in R\implies x\leq  d\
\end{equation}
is an $(\alpha,P)$-linearisation.
Using specific properties of $g$, e.g., monotonicity, we can extend the polytope
$R$ to an unbounded one.
\end{description}
This linearisation excludes the set $\{x: x> d\}\times R$
from the search space which is a polytope, now in $\R\times\R^k$,  containing the point $(c_x,\vec c_{\vec y})$.
\pagebreak[3]

Linearisations in Example~\ref{example:complete} are  of this type,
there $c_g=1/c_y$ and $R=(1/d,\infty)$  defined by $y>1/d$ which is the negation
of $y\leq 1/d$, the second literal in the linear lemma $L_{\alpha,P}$ is the
right hand side of the implication~\eqref{eq:intlin}.


The univariate predicate $x\leq d$ corresponds to a very special
half-space in $\R\times \R^k$, as it is independent from the variables in $\vec y$.
Usually, using partial derivatives gives better linearisations: 

\begin{description}
\item[Tangent Space Linearisation:]
    Suppose we again have $c_x\neq c_g$. 
    Assume the partial derivatives of $g$ at $\vec c_{\vec y}$ exist
    and we are able to compute a vector $\vec c_{\partial{}}=(c_1,\ldots,c_k)$
    of rational approximations, that is,
    $c_i\approx \frac{\partial g}{\partial y_i}(\vec c_{\vec y})$.
    As before we construct $d\in \Q$ 
    with $c_g < d < c_x$
    and search for a rational polytope $R\in\R^k$
    with $\vec c_{\vec y}\in R$.
    But instead of just $d\not\in g(R)$ now $R$ has to fulfil the constraint 
    \[
    \forall \vec r \in R:  g(\vec r) \leq  d+ \vec c_{\partial{}}\cdot (\vec r-\vec c_{\vec y})
    \]
    using the dot product of $\vec c_{\partial{}}$ and $(\vec r-\vec c_{\vec y})$.
    Again, $R$ can be found using \todo{maybe to refer to \texttt{tangentspace}?}
    interval computation. 
    Then
    \[\vec y\in R\implies x \leq d+ \vec c_{\partial{}}\cdot (\vec y-\vec c_{\vec y})\] 
    is an $(\alpha,P)$-linearisation, since the dot-product is a linear and rational
    operation.
    This situation is schematically depicted in \Cref{fig:PDL}.
\end{description}
\begin{figure}[t]
\centering
\includegraphics{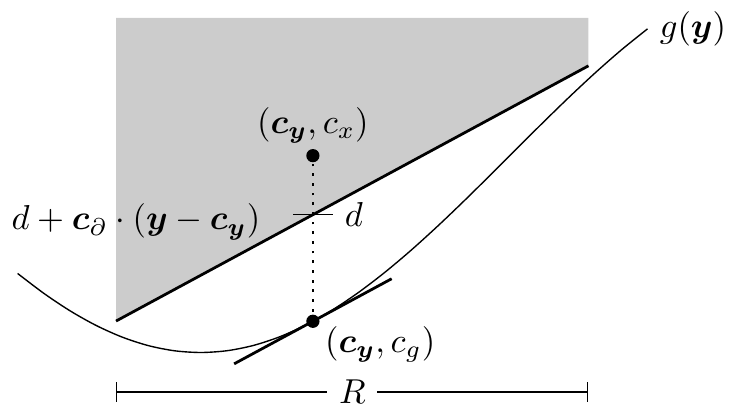}
\caption{Tangent Space Linearisation shown for univariate $g$. The shaded area will be excluded from the search space.}
\label{fig:PDL}
\end{figure}
Using the tangent space, we are able to get a much better `fit' of
 $d+\sum c_iy_i$ to $g$ than just using the naive interval evaluations.
This allows to choose $d$ closer to $c_g$ for given $R$, or to choose a bigger polytope $R$ for a given $d$.
Some examples of Tangent Space Linearisations are available.\cref{footnote:ksmt}

\begin{lemma}
By construction, the above procedures indeed provide linearisations as stated in
\Cref{def:linearisation}.
\end{lemma}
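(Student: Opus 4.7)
The plan is to verify the two conditions of \Cref{def:linearisation} for each of the four linearisation schemes (Point, Half-Line, Interval, Tangent Space) in turn, working under the standing assumptions of the subsection, namely $\diamond\in\{<,\leq\}$ and hence $c_x>c_g$ (with the inequality weak if $\diamond$ is strict). In every case I would first check condition~(2), $\llbracket L_{\alpha,P}\rrbracket^\alpha=\False$, by direct substitution of $\alpha$, and then verify condition~(1), that any $\beta$ with $\llbracket P\rrbracket^\beta=\True$ satisfies $\llbracket L_{\alpha,P}\rrbracket^\beta=\True$.

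For the Point and Half-Line Linearisations the arguments are immediate. Under $\alpha$ the premise $\vec y=\vec c_{\vec y}$ holds while the conclusion ($x\neq c_x$, resp.\ $x<c_x$) fails, so the implication evaluates to $\False$. For condition~(1), take any $\beta$ with $\llbracket P\rrbracket^\beta=\True$: either $\llbracket\vec y\rrbracket^\beta\neq\vec c_{\vec y}$, in which case the premise is already false under $\beta$, or $\llbracket\vec y\rrbracket^\beta=\vec c_{\vec y}$, so that $\llbracket f(\vec t)\rrbracket^\beta=g(\vec c_{\vec y})=c_g$, and combining $\llbracket x\rrbracket^\beta\diamond c_g$ with $c_g\leq c_x$ (or $c_g<c_x$) yields the required $\llbracket x\rrbracket^\beta<c_x$ (and hence also $\neq c_x$).

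For the Interval Linearisation, condition~(2) holds because $\vec c_{\vec y}\in R$ makes the premise true while $c_x>d$ makes the conclusion $x\leq d$ false. Condition~(1) is the delicate part: one must show that whenever $\llbracket P\rrbracket^\beta=\True$ and $\llbracket\vec y\rrbracket^\beta\in R$, then $\llbracket x\rrbracket^\beta\leq d$. The key observation, extracted from the proof of \Cref{lem:pred-eval-computable}, is that the parallel search producing $R$ actually delivers a rational box $(s,t)\in\square_g$ with $g(R)\subset(s,t)$ and $c_x\notin(s,t)$; since $c_g\in(s,t)$ and $c_x>c_g$ this forces $t\leq c_x$, and choosing $d$ in $[t,c_x)$ (as done when picking $d$ strictly between $c_g$ and $c_x$) guarantees $g(R)\subseteq(-\infty,d]$. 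Then $\llbracket x\rrbracket^\beta\leq g(\llbracket\vec y\rrbracket^\beta)\leq d$ (strictly in the strict case), establishing (1).

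For the Tangent Space Linearisation, condition~(2) is again a direct substitution: at $\vec y=\vec c_{\vec y}$ the affine correction $\vec c_{\partial}\cdot(\vec y-\vec c_{\vec y})$ vanishes, so the conclusion reduces to $x\leq d$, which is false since $c_x>d$. Condition~(1) follows from the chain
\[
\llbracket x\rrbracket^\beta\;\leq\;g\bigl(\llbracket\vec y\rrbracket^\beta\bigr)\;\leq\;d+\vec c_{\partial}\cdot\bigl(\llbracket\vec y\rrbracket^\beta-\vec c_{\vec y}\bigr),
\]
where the first inequality uses truth of $P$ under $\beta$ and the second is exactly the defining constraint imposed on $R$ in the construction. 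The main obstacle throughout is therefore isolating, in the Interval and Tangent-Space cases, the fact that the simultaneous choice of $R$ and $d$ provided by the numerical procedures yields a \emph{uniform} bound on $g$ over $R$ compatible with $d$; once this is made explicit, conditions~(1) and~(2) reduce to algebraic bookkeeping.
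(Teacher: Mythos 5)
Your proposal is correct and is essentially the verification the paper leaves implicit: the lemma is proved ``by construction'', i.e.\ by checking conditions (1) and (2) of \Cref{def:linearisation} for each scheme exactly as you do, and your observation that in the Interval and Tangent Space cases the real content is the \emph{uniform} bound on $g$ over $R$ (coming from the interval over-approximation in $\square_g$, not merely from $d\notin g(R)$) is precisely the point the terse statement glosses over. One small refinement to your Interval argument: rather than first obtaining the box $(s,t)$ excluding $c_x$ and then picking $d\in[t,c_x)$ (which degenerates when $t=c_x$), follow the paper's order — pick $d\in\mathbb Q$ with $c_g<d<c_x$ first and run the search of \Cref{lem:pred-eval-computable} against $d$ (it terminates since $d\neq c_g$), yielding $g(R)\subset(s,t)$ with $c_g\in(s,t)$ and $d\notin(s,t)$, hence $t\leq d$ and $g(R)\subseteq(-\infty,d)$; the rest of your bookkeeping then goes through unchanged.
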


For the rest of this section, we briefly discuss more specific linearisations for some important cases
when we can perform a by-case analysis on $g$
and exploit further properties like (piecewise) monotonicity, convexity
or boundedness, which cannot be deduced by naive interval arithmetic,
see \Cref{sec:eval} for details. 
%
%
\pagebreak[2]
\begin{itemize}
\item $g(y)=y^{2n}$ is convex, with polytope $R=(-\infty,+\infty)$ for $\diamond\in\{{>},{\geq}\}$. 
\item $g(y)=y^{2n+1}$ is monotonically increasing, with polytopes $R$ of the form $(-\infty,c]$, 
similar to the linearisation in \Cref{linear:product}.
\item Polynomials can be decomposed into monomials.
\item Piecewise convex/concave functions $g$ like $\sin,\cos,\tan$ allow polytopes covering a 
convex area in their domain.
\item  More direct ways of computing linearisations of the elementary transcendental functions 
can be obtained e.g.\ by bounding the distance of the image of specific $g$ 
to algebraic numbers, 
such bounds are given in \cite[section 4.3]{lefev2000}.
\end{itemize}

\section{Evaluation}\label{sec:eval}
We implemented our approach in the \ksmt system, which is open source and publicly available~\footnote{\label{footnote:ksmt}\url{http://informatik.uni-trier.de/~brausse/ksmt}}.
The \ksmt system supports a subset of \texttt{QF\_LRA} and \texttt{QF\_NRA} logics as defined in the SMT-LIB standard.
As with \texttt{Z3}, when no logic is specified in the input script, our extended signature $\mathbb R_\nl$ is the default.

Choices made in the implementation include:
\begin{itemize}
\item
    Selecting a rational value in a non-empty interval as smallest dyadic or by continued fractions.
\item
    The decision which clauses to resolve on conflict is guided by an internal SAT-solver.
\item
    Heuristic about reusing existing constraints when computing polytope $R$,
    leading to piecewise linear approximations of $g$.
\item
    Specialised linearisation algorithms for specific combinations of subclasses of functions $g\in\Fdec$ and $\vec c_{\vec y}$:
    \begin{description}
    \item[differentiable:]
        Use Tangent Space Linearisation.
    \item[convex/concave:]
        Derive the polytope $R$ from computability of unique intersections between $g$ and the linear bound on $\vec y$.
    \item[piecewise:]
        This is a 
        meta-class in the sense that
        %
        %
        $\dom g$ is partitioned into $(P_i)_{i\in I}$ where the $P_i$ are linear or non-linear predicates in $\vec y$, and for each $i\in I$ there is a linearisation algorithm, then the decision which linearisation to use is based on membership of the exact rational value $\vec c_{\vec y}$ in one of the $P_i$.
    \item[rational:] 
        Evaluate $c_g$ exactly in order to decide on linearisation to use.
   \item[transcendental:] 
        Bound $|c_x-c_g|$ by a rational from below by approximating $c_g$ by the TTE implementation \texttt{iRRAM}\footnote{\url{http://irram.uni-trier.de}}
        \cite{Muller00}
        in order to compute $d$.
    \end{description}

\end{itemize}


We evaluated our approach over higher dimensional sphere packing benchmarks which are available at~\cref{footnote:ksmt}.
Sphere packing is a well known problem which goes back to Kepler's conjecture,
and in higher dimensions is also of practical
importance e.g., in error correcting codes.
The purpose of this evaluation is to exemplify 
that our approach is viable 
and can contribute to the current state-of-the-art, extensive evaluation is left for future work.
\begin{table}[t]
\begin{center}
\begin{tabular}[c]{c@{: }p{4.5em}}
	s & `\SAT' \\
	$\delta$ & '$\delta$-\texttt{sat}', $\delta=10^{-3}$ \\
	u & `\UNSAT' \\
	? & `\UNKNOWN' \\
	> & timeout
\end{tabular}%
\;\;\;%
\begin{tabular}[c]{cc|lr|lr|lr|lr|lr|lr|lr}
	$d$ & $n$
	& \multicolumn{2}{c|}{\texttt{ksmt}}
	& \multicolumn{2}{c|}{\texttt{cvc4}}
	& \multicolumn{2}{c|}{\texttt{z3}}
	& \multicolumn{2}{c|}{\texttt{mathsat}}
	& \multicolumn{2}{c|}{\texttt{yices}}
	& \multicolumn{2}{c|}{\texttt{dreal}}
	& \multicolumn{2}{c}{\texttt{rasat}}
	\\ \hline
	\multirow{5}{*}{$2$} & $2$ & s & 0.01s & ? & 0.03s    & s & 0.01s    & s & 0.02s & s & 0.01s & $\delta$ &    0.01 & s & 0.02 \\
	                     & $3$ & s & 0.03s & ? & 0.08s    & > & 60m      & s & 0.24s & s & 0.03s & $\delta$ &    0.02 & > & 8h   \\
	                     & $4$ & > &  8h   & u & 1474.16s & > & 60m      & u & 8.11s & > & 17h   & $\delta$ &    0.05 & > & 8h   \\
	                     & $5$ & u & 1.43s & u & 0.45s    & > & 8h       & u & 0.28s & > & 8h    & u        & 3581.96 & > & 8h   \\
	                     & $6$ & u & 5.00s & u & 0.75s    & > & 8h       & u & 0.40s & > & 166m  & >        &    8h   & > & 8h   \\ \hline
	\multirow{2}{*}{$3$} & $5$ & s & 0.93s & ? & 465.45s  & > & 8h       & s & 0.12s & s & 0.06s & >        &    8h   & > & 8h   \\
	                     & $6$ & s & 6.02s & > & 143m     & > & 7h       & > & 8h    & > & 6h    & >        &    8h   & > & 8h   \\ \hline
	\multirow{3}{*}{$4$} & $5$ & s & 0.38s & ? & 1544.87s & s & 2165.78s & s & 0.10s & s & 7.34s & >        &    8h   & > & 8h   \\
	                     & $6$ & s & 0.57s & > & 91m      & > & 8h       & s & 0.23s & s & 0.38s & >        &    8h   & > & 8h   \\
	                     & $7$ & s &14.27s & > & 160m     & > & 8h       & s & 0.18s & > & 8h    & >        &    8h   & > & 8h
\end{tabular}%
\end{center}
\caption{Benchmarks of $K_{n,d}$ for different $n,d$.}
\label{tab:bench-K}
\end{table}
\begin{table}[t]
\begin{center}
\begin{tabular}[c]{c|lr|lr|lr|lr|lr|lr|lr}
    $r$ 
	& \multicolumn{2}{c|}{\texttt{ksmt}}
	& \multicolumn{2}{c|}{\texttt{cvc4}}
	& \multicolumn{2}{c|}{\texttt{z3}}
	& \multicolumn{2}{c|}{\texttt{mathsat}}
	& \multicolumn{2}{c|}{\texttt{yices}}
	& \multicolumn{2}{c|}{\texttt{dReal}}
	& \multicolumn{2}{c}{\texttt{raSAT}}
	\\ \hline
    $\sqrt{37}$ & u &  0.07s & u & 0.76s & u &   510.67s & u &   40.55s & u &  0.07s & u        &  0.01 & > & 8h \\
    $\sqrt{49}$ & u &  0.40s & u & 2.46s & u & 23211.20s & u & 6307.18s & u &  0.11s & u        &  0.03 & > & 8h \\
    $\sqrt{62}$ & u & 11.61s & u & 5.07s & u &   210.16s & > &    14.5h & u & 76.82s & u        &  2.00 & > & 8h \\
    $\sqrt{63}$ & u & 55.84s & ? & 0.48s & u &  3925.65s & > &    14.5h & u &  0.10s & u        & 12.38 & > & 8h \\
    $\sqrt{64}$ & s &  0.01s & ? & 0.01s & s &     0.00s & > &    21.6h & s &  0.00s & $\delta$ &  0.01 & > & 8h
\end{tabular}
\end{center}
\caption{Benchmarks of $C_r$ for different $r$.}
\label{tab:bench-B}
\end{table}

The solvers\footnote{%
    \texttt{\ksmt-0.1.3}, \texttt{cvc4-1.6+gmp}, \texttt{z3-4.7.1+gmp}, \texttt{mathsat-5.5.2}, \texttt{yices-2.6+lpoly-1.7}, \texttt{dreal-v3.16.08.01}, \texttt{rasat-0.3}}
were compiled with GCC-8.2 according to their respective documentation (except for \texttt{mathsat}, which is not open-source). 
Experiments were run on a machine with 32~GiB RAM, 3.6~GHz Core i7 processor and Linux 3.18.

\begin{example}[Sphere packing] 
Let $n,d\in\mathbb N$ and let
\[ K_{n,d}\coloneqq\exists\vec x_1,\ldots,\vec x_n\in\mathbb R^d:
	\bigwedge_{1\leq i\leq n}\Vert\vec x_i\Vert_\infty\leq 1\land
	\bigwedge_{1\leq i<j\leq n}\Vert\vec x_i-\vec x_j\Vert_2>2 \]
An instance $K_{n,d}$ is \SAT iff $n$ balls fit into a $d$-dimensional box of
radius $2$ without touching each other.
In the SMT-Lib language the $\Vert\cdot\Vert_\infty$ norms in these instances
are formulated using per-component comparisons to the lower and upper endpoints
of the range, while the euclidean norms $\Vert\vec s\Vert_2>t$ are expressed by
the equivalent squared variant $\sum_i\vec s_i^2>t^2$.
\Cref{tab:bench-K} provides a comparison of
different solvers on instances of this kind.
\end{example}
\begin{example} 
Let $r\in\mathbb Q$, then
\[ C_r\coloneqq\exists\vec x,\vec y\in\mathbb R^3:\Vert\vec x\Vert_2^2\leq r^2\land\Vert\vec y\Vert_2^2\geq 8^2\land\Vert\vec x-\vec y\Vert_\infty\leq\tfrac1{100}\text. \]
$C_r$ is \SAT for some $r\in[0,8]$ iff there is a translation of the center
$\vec x$ of the 3-dimensional ball $B_r(\vec x)$ in a box of radius
$\frac1{100}$ such that it intersects the complement of $B_8(\vec y)$.
Since the constraints are expressed as square-root-free expressions, obviously for $r\geq 8-\frac1{100}$, there is a solution.
\Cref{tab:bench-B} list running times for various $r$ of our solver and other solvers of non-linear real arithmetic.

Noteworthy about these benchmarks is the monotonicity of the running times of \ksmt in contrast to e.g.\ \texttt{yices} in conjunction with unlimited precision, which seems to be what prevents \texttt{cvc4} from deciding the instance for $r=\sqrt{63}$ and even $r=\sqrt{64}$.
\end{example}



These experiments show that already in the early stage of the implementation, our system can handle high dimensional non-linear problems which are challenging for most SMT solvers.

\section{Conclusions and future work}
In this paper we presented a new approach for solving non-linear constraints over the reals. 
Our \ksmt calculus  
combines model-guided solution search with targeted linearisations for resolving non-linear conflicts. 
We implemented our approach in the \ksmt system, our preliminary evaluation shows promising results demonstrating viability of the proposed approach. 

For future work we are developing more precise linearisations for specific trigonometric functions
and are analyzing the complexity of deciding conflicts in general.
We are working on extending the applicability of our implementation and a more extensive evaluation.
We are also investigating theoretical properties of our calculus, such completeness in restricted settings and $\delta$-completeness.
 

\paragraph{Acknowledgements.}
We thank the anonymous reviewers and Stefan Ratschan for their helpful comments.

\bibliographystyle{abbrv}

{\tiny
\bibliography{main}
}

\end{document}